\documentclass[a4paper,UKenglish,cleveref]{lipics-v2019}
\bibliographystyle{plainurl}

\title{Colouring $(sP_1+P_5)$-Free Graphs: a Mim-Width Perspective}

\titlerunning{Colouring $(sP_1+P_5)$-free graphs: a mim-width perspective}

\author{Nick Brettell}{Department of Computer Science, Durham University, UK}{nbrettell@gmail.com}{https://orcid.org/0000-0002-1136-418X}{}

\author{Jake Horsfield}{School of Computing, University of Leeds, Leeds,
UK}{sc15jh@leeds.ac.uk}{https://orcid.org/0000-0002-4388-5123}{}

\author{Dani\"el Paulusma}{Department of Computer Science, Durham University, UK}{daniel.paulusma@durham.ac.uk}{https://orcid.org/0000-0001-5945-9287}{}

\authorrunning{N. Brettell, J. Horsfield, and D. Paulusma}

\Copyright{Nick Brettell, Jake Horsfield, and Dani\"el Paulusma}

\ccsdesc[100]{Theory of computation → Graph algorithms analysis}

\keywords{Hereditary graph class, mim-width, colouring}

\funding{The research in this paper received support from the Leverhulme Trust (RPG-2016-258).}

\acknowledgements{}

\nolinenumbers

\hideLIPIcs

\usepackage{amsmath,amssymb,amsthm,xspace,xcolor,tikz,graphicx,graphics,color,tkz-graph}
\usetikzlibrary{fit,automata,arrows}

\newcommand{\NP}{{\sf NP}}

\newcommand{\W}{{\sf W}}
\newcommand{\cutmim}{\mathrm{cutmim}}
\newcommand{\mimw}{\mathrm{mimw}}

\begin{document}

\maketitle

\begin{abstract} 
We prove that the class of $(K_t,sP_1+P_5)$-free graphs has bounded mim-width for every $s\geq 0$ and $t\geq 1$, and that there is a polynomial-time algorithm that, given a graph in the class, computes a branch decomposition of constant mim-width. A large number of \NP-complete graph problems become polynomial-time solvable on graph classes with bounded mim-width and for which a branch decomposition is quickly computable. The $k$-{\sc Colouring} problem is an example of such a problem. For this problem, we may assume that the input graph is $K_{k+1}$-free. Then, as a consequence of our result, we obtain a new proof for the known result that for every fixed $k\geq 1$ and $s\geq 0$, $k$-{\sc Colouring} is polynomial-time solvable for $(sP_1+P_5)$-free graphs.  
In fact, our findings show that the underlying reason for this polynomial-time algorithm is that the class {\it has bounded mim-width}.
\end{abstract}

\section{Introduction}

A graph class ${\cal G}$ has bounded ``width'' if there exists a constant $c$ such that every graph in ${\cal G}$ has width at most~$c$.
We continue a recent study~\cite{BMPP} on the boundedness of a particular width parameter, namely {\it mim-width}, for hereditary graph classes. 
Mim-width was introduced by Vatshelle~\cite{Va12}, who proved that it is more powerful than clique-width, and several well-known width parameters of equivalent strength, including boolean width, module-width, NLC-width and rank-width. That is, every graph class of bounded clique-width has bounded mim-width, but there exist graph classes of bounded mim-width that have unbounded clique-width. Hence, proving that a problem is polynomial-time solvable for graphs of bounded mim-width yields larger ``islands of tractability'' than doing this for clique-width (or any parameter equivalent to clique-width) in terms of the graphs that admit a polynomial-time algorithm. However, fewer problems have such an algorithm; see~\cite{BV13,BK19,BPT19,BTV13,GMR20,JKST19,JKT,JKT19} for some examples.

To define mim-width, we need the notion of a \textit{branch decomposition} for a graph~$G$, which is a pair $(T, \delta)$, where $T$ is a subcubic tree and $\delta$ is a bijection from~$V(G)$ to the leaves of $T$.  
We note that each edge $e \in E(T)$ partitions the leaves of $T$ into two classes, $L_e$ and $\overline{L_e}$, depending on which component of $T-e$ they belong to.
In this way, each edge $e$ induces a partition $(A_e, \overline{A_e})$ of $V(G)$, where $\delta(A_e) = L_e$ and $\delta(\overline{A_e}) = \overline{L_e}$.
Let $G[A_e,\overline{A_e}]$ denote the bipartite subgraph of $G$ induced by the edges with one end-vertex in $A_e$ and the other in $\overline{A_e}$.
A matching $F \subseteq E(G)$ of $G$ is {\it induced} if there is no edge in $G$ between vertices of different edges of $F$.
We let $\cutmim_{G}(A_{e}, \overline{A_{e}})$ denote the size of a maximum induced matching in $G[A_{e}, \overline{A_{e}}]$.
The \emph{mim-width} $\mimw_{G}(T, \delta)$ of $(T, \delta)$ is the maximum value of $\cutmim_{G}(A_{e}, \overline{A_{e}})$ over all edges $e\in E(T)$. The \emph{mim-width} $\mimw(G)$ of $G$ is the minimum value of $\mimw_{G}(T, \delta)$ over all branch decompositions $(T, \delta)$ for $G$. 

S{\ae}ther and Vatshelle~\cite{SV16} proved that computing mim-width is \NP-hard. In the same paper they  showed that deciding if the mim-width is at most $k$ is \W$[1]$-hard when parameterized by $k$, and that there is no polynomial-time algorithm for approximating the mim-width of a graph to within a constant factor of the optimal, unless $\mathsf{NP} = \mathsf{ZPP}$.
However, there are many graph classes (see~\cite{BMPP,BTV13,KKST17}) for which mim-width is bounded and {\it quickly computable}.
By the latter, we mean that the class admits a polynomial-time algorithm for computing a branch decomposition whose mim-width is bounded by a constant.

A class of graphs is {\it hereditary} if it is closed under vertex deletion, or equivalently, can be characterized by a set of forbidden induced subgraphs.
In particular, a graph $G$ is {\it $H$-free} for some graph $H$ if $G$ does not contain $H$ as an induced subgraph, and $G$ is {\it $(H_1,H_2)$-free} for some graphs $H_1$ and $H_2$ if $G$ is both $H_1$-free and $H_2$-free.
It is not difficult to prove that a class of $H$-free graphs has bounded mim-width if and only if $H$ is an induced subgraph of the $4$-vertex path $P_4$; see~\cite{BMPP}, in which a study into the boundedness of mim-width of $(H_1,H_2)$-free graphs was initialized. We also refer to~\cite{BMPP} for an up-to-date summary and list of open cases. 
In this brief note, we focus on a specific family of $(H_1,H_2)$-free graphs. 
The \textit{disjoint union} $G+H$ of graphs $G$ and $H$ is the graph with vertex set $V(G) \cup V(H)$ and edge set $E(G) \cup E(H)$. The graph $kG$ denotes the disjoint union of $k$ copies of $G$. The path and complete graph on $n$ vertices are denoted by $P_n$ and $K_{n}$, respectively. We can now state our main result.

\begin{theorem}\label{t-main}
For every $s\geq 0$ and $t\geq 1$, the mim-width of the class of $(K_t,sP_1+P_5)$-free graphs is bounded 
and quickly computable.
\end{theorem}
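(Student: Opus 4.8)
The plan is to argue by induction on $t$; the base case $t\le 2$ is trivial, since every $(K_2,sP_1+P_5)$-free graph is edgeless and so has mim-width $0$. The first ingredient I would establish is a structural lemma: every connected $(K_t,sP_1+P_5)$-free graph $G$ has a dominating set $D$ whose size is bounded by a function of $s$ and $t$, and such a $D$ can be found in polynomial time for fixed $s$ and $t$. I would prove this by a secondary induction on $s$. When $s=0$, $G$ is connected and $P_5$-free, so by the Bacs\'o--Tuza theorem it has a dominating clique --- of size at most $t-1$, as $G$ is $K_t$-free --- or a dominating $P_3$. When $s\ge 1$, either $G$ is already $((s-1)P_1+P_5)$-free and the induction hypothesis applies, or $G$ contains an induced $(s-1)P_1+P_5$ on a set $A$ of $s+4$ vertices, and then $A$ must be dominating: any vertex with no neighbour in $A$ would, together with $A$, induce an $sP_1+P_5$. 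Such a set $A$ can be found by brute force in time $O(n^{s+4})$.

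For the inductive step, let $G$ be $(K_t,sP_1+P_5)$-free with $t\ge 3$. Since the mim-width of a graph is at most the maximum of the mim-widths of its connected components --- a branch decomposition of the whole graph being assembled from decompositions of the components with no crossing edges across the new tree edges --- I may assume $G$ is connected, and fix a dominating set $D$ of bounded size as above. I then partition $V(G)\setminus D$ by neighbourhood in $D$: for each nonempty $S\subseteq D$, put $V_S=\{v\in V(G)\setminus D : N(v)\cap D=S\}$; as $D$ dominates, the set $D$ together with the sets $V_S$ partitions $V(G)$ into at most $1+2^{|D|}$ parts. Two facts now carry the argument. First, each $G[V_S]$ is $(K_{t-1},sP_1+P_5)$-free: picking $d\in S$, any $K_{t-1}$ inside $V_S$ would form a $K_t$ together with $d$; hence by the induction hypothesis a branch decomposition of $G[V_S]$ of bounded mim-width can be computed in polynomial time. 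Second --- and this is where $sP_1+P_5$-freeness enters essentially --- for distinct $S,S'$, every induced matching of $G$ whose edges all run between $V_S$ and $V_{S'}$ has at most $s+1$ edges: if $u_1w_1,\dots,u_mw_m$ were such a matching with $u_i\in V_S$, $w_i\in V_{S'}$ and $m\ge s+2$, then, taking $d$ in the symmetric difference of $S$ and $S'$, say $d\in S\setminus S'$, the vertices $w_1,u_1,d,u_2,w_2$ would induce a $P_5$, while $w_3,\dots,w_{s+2}$ would be $s$ further vertices, pairwise non-adjacent and each non-adjacent to that $P_5$, yielding an induced $sP_1+P_5$.

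It remains to combine these ingredients, which I would do by a routine composition step. Build a caterpillar tree whose spine carries, as pendant subtrees, the computed constant-mim-width branch decompositions of the graphs $G[V_S]$, and attach the vertices of $D$ as additional leaves. For any edge $e$ of this tree, the induced partition $(A_e,\overline{A_e})$ of $V(G)$ splits each part $V_S$ into two pieces, and an induced matching of $G$ across this cut can use at most the mim-width of the computed decomposition of $G[V_S]$ edges inside any single part $V_S$, at most $s+1$ edges between any two distinct parts, and at most $|D|$ edges incident to $D$. Since the number of parts and $|D|$ are bounded in terms of $s$ and $t$, this produces a branch decomposition of $G$ of mim-width bounded by a constant depending only on $s$ and $t$ (obtained explicitly by unwinding the recursion on $t$), and every step --- finding $D$, recursing on the $V_S$, and combining --- runs in polynomial time for fixed $s$ and $t$. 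I expect the only genuinely delicate point to be the bookkeeping in this last step, confirming that each tree edge yields a cut of bounded mim-width; the dominating-set lemma rests on the classical Bacs\'o--Tuza theorem and the bound on induced matchings between parts is a short forbidden-subgraph argument.
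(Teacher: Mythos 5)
Your overall strategy---induction on $t$, a bounded dominating set $D$ obtained from Bacs\'o--Tuza plus a secondary induction on $s$, a partition of $V(G)\setminus D$ into boundedly many parts each dominated by a single vertex of $D$ (hence $(K_{t-1},sP_1+P_5)$-free), a bound on matchings between any two parts, and a composition step gluing the recursively computed branch decompositions along a caterpillar---is exactly the architecture of the paper's proof. (Your partition by exact neighbourhood in $D$ into up to $2^{|D|}$ parts, rather than the paper's ordered partition into $|D|$ parts, is immaterial.) However, there is a genuine gap in the step bounding the matchings between two parts $V_S$ and $V_{S'}$, and this is precisely the point where the paper has to work harder than you do.

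The quantity your composition step must control is $\cutmim_G(V_S,V_{S'})$: the size of a largest matching between $V_S$ and $V_{S'}$ that is induced \emph{in the bipartite graph} $G[V_S,V_{S'}]$. In that notion, edges of $G$ lying inside $V_S$ or inside $V_{S'}$ between endpoints of different matching edges are permitted, since they are not edges of the cut. Your forbidden-subgraph argument bounds only induced matchings \emph{of $G$}: to conclude that $w_1,u_1,d,u_2,w_2$ induce a $P_5$ and that $w_3,\dotsc,w_{s+2}$ are pairwise non-adjacent and anticomplete to it, you need $u_1u_2$, $w_1w_2$ and all pairs among $w_1,\dotsc,w_{s+2}$ to be non-edges, which the bipartite notion does not guarantee. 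Indeed the bound $s+1$ is false for $\cutmim$: with $t=4$ and $s=0$, take $u_1u_2\in E(G[V_S])$, $w_1w_2\in E(G[V_{S'}])$, matching edges $u_1w_1,u_2w_2$, non-edges $u_1w_2,u_2w_1$, and $d$ adjacent to $u_1,u_2$ only; this five-vertex configuration is $(K_4,P_5)$-free (its six edges admit no induced $P_5$ and no $K_4$), yet $\cutmim_G(V_S,V_{S'})=2>s+1$. The repair is the paper's Ramsey step: since $G[V_S]$ and $G[V_{S'}]$ are $K_{t-1}$-free, any matching of the cut of size $R(t-1,R(t-1,s+2))$ that is induced in the bipartite sense contains a sub-matching of size $s+2$ whose endpoints form independent sets on both sides, and only then does your $sP_1+P_5$ argument go through. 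This replaces your bound $s+1$ by a (still constant) iterated Ramsey number and leaves the rest of your proof intact.
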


\noindent
Apart from solving an infinite family of open cases, we believe this result is of interest as it sheds some light on a known result on graph colouring. 
A {\em colouring} of a graph $G=(V,E)$ is a mapping $c: V\rightarrow\{1,2,\ldots \}$ that assigns each vertex~$u\in V$ a {\it colour} $c(u)$ in such a way that $c(u)\neq c(v)$ whenever $u$ and $v$ are adjacent.
If $1\leq c(u)\leq k$ for each $u \in V$, then $c$ is also called a {\it
$k$-colouring} of $G$. The {\sc Colouring} problem is to decide, on
input a graph $G$ and an integer $k$, whether $G$ has a $k$-colouring. For fixed $k$ (that is, $k$ is not part of the input), the problem is known as $k$-{\sc Colouring}. 

It is well known that {\sc $3$-Colouring} is \NP-complete~\cite{Lo73},
but the problem becomes polynomial-time solvable under certain input restrictions. This led to a large number of complexity results for {\sc Colouring} and $k$-{\sc Colouring} on special graph classes.
The complexity of {\sc Colouring} for $H$-free graphs has been settled~\cite{KKTW01}, but there are still 
a number of open cases for {\sc $k$-Colouring} restricted to $H$-free graphs when $H$ is a {\it linear forest}, that is, 
a disjoint union of paths; see~\cite{GJPS17} for a survey and~\cite{CHSZ18,KMMNPS18} for some later summaries. 
In particular, Ho\`ang et al.~\cite{HKLSS10} proved that for every integer $k\geq 1$, {\sc $k$-Colouring} is polynomial-time solvable for $P_5$-free graphs. This result was generalized by Couturier et al.~\cite{CGKP15}.

\begin{theorem}[\cite{CGKP15}]\label{t-known}
For every $k\geq 1$ and $s\geq 0$, $k$-{\sc Colouring} is polynomial-time solvable for $(sP_1+P_5)$-free graphs. 
\end{theorem}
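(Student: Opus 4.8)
The plan is to derive Theorem~\ref{t-known} directly from our main result, Theorem~\ref{t-main}, together with the standard algorithmic meta-theorem for graphs of bounded mim-width. Fix $k\geq 1$ and $s\geq 0$, and let $G$ be an $n$-vertex $(sP_1+P_5)$-free graph given together with the (fixed) integer~$k$; the argument has three steps.

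First I would apply a standard preprocessing step. If $G$ contains $K_{k+1}$ as a subgraph---equivalently, as an induced subgraph, since $K_{k+1}$ is complete---then $G$ is not $k$-colourable, and the algorithm returns ``no''. Testing this takes $\Oh(n^{k+1})$ time by inspecting all $(k+1)$-element vertex subsets, which is polynomial since $k$ is a constant. Hence we may assume from now on that $G$ is $K_{k+1}$-free, so that $G$ belongs to the class of $(K_{k+1},sP_1+P_5)$-free graphs.

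Second I would apply Theorem~\ref{t-main} with $t=k+1$ (note $k+1\geq 1$): in polynomial time we obtain a branch decomposition $(T,\delta)$ of $G$ with $\mimw_G(T,\delta)\leq c$ for some constant $c=c(s,k)$. Third, I would appeal to the fact that $k$-{\sc Colouring} is a locally checkable vertex-partitioning problem, and that every such problem can be solved in time $n^{\Oh(w)}$ on an $n$-vertex graph that is supplied together with a branch decomposition of mim-width~$w$; this is the dynamic-programming framework of Bui-Xuan, Telle and Vatshelle~\cite{BTV13} (see also~\cite{BV13}). Running this algorithm on $G$ and $(T,\delta)$ takes time $n^{\Oh(c(s,k))}$, which is polynomial in~$n$. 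The three steps together yield the desired polynomial-time algorithm.

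I expect no genuine obstacle in this derivation --- all of the work is in establishing Theorem~\ref{t-main}. The one point that deserves a comment is why the clique bound introduced in the first step cannot be dropped. For $s=0$ the class of $(sP_1+P_5)$-free graphs is precisely the class of $P_5$-free graphs, which has \emph{unbounded} mim-width because $P_5$ is not an induced subgraph of $P_4$; and for $s\geq 1$ this class only grows. So one cannot apply the mim-width machinery to $(sP_1+P_5)$-free graphs directly, and it is exactly the reduction to $K_{k+1}$-free graphs that moves us into the tractable regime isolated by Theorem~\ref{t-main}.
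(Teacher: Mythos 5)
Your derivation is correct and is essentially identical to the one the paper itself gives for Theorem~\ref{t-known}: reduce to $K_{k+1}$-free instances (rejecting otherwise), apply Theorem~\ref{t-main} with $t=k+1$ to obtain a branch decomposition of constant mim-width in polynomial time, and run the dynamic programming of Bui-Xuan, Telle and Vatshelle~\cite{BTV13}. Your closing remark on why the $K_{k+1}$-freeness step is unavoidable (since $P_5$-free graphs already have unbounded mim-width) is also consistent with the paper's discussion.
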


We note that Theorem~\ref{t-main} implies Theorem~\ref{t-known}, as $k$-{\sc Colouring}, for every fixed integer $k\geq 1$, is polynomial-time solvable for a graph class where mim-width is bounded and quickly computable\footnote{This is not true for the {\sc Colouring} problem, where $k$ is part of the input; indeed {\sc Colouring} is \NP-complete for circular-arc graphs~\cite{GJMP80}, despite the fact that this is a class for which mim-width is bounded and quickly computable~\cite{BV13}.}~\cite{BTV13}, and we may assume that an instance of $k$-{\sc Colouring} is $K_{k+1}$-free, for otherwise it is clearly not $k$-colourable.

For an integer $k\geq 1$, a  {\it $k$-list assignment} of a graph
$G=(V,E)$ is a function $L$ that assigns each vertex $u\in V$ a {\it list} $L(u)\subseteq \{1,2,\ldots,k\}$ of {\it admissible} colours for $u$. A colouring $c$ of $G$ {\it respects} $L$ if  $c(u)\in L(u)$ for every $u\in V$. 
For fixed $k\geq 1$, the {\sc List $k$-Colouring} problem is to decide if a graph $G$ has a colouring that respects a $k$-list assignment $L$.
Theorem~\ref{t-known} holds even for {\sc List $k$-Colouring}. 
It is known that {\sc List $k$-Colouring} is polynomial-time solvable for graph classes of bounded clique-width~\cite{KR03}.
Kwon~\cite{Kw20} observed the following. Given an instance ($G,L)$ of {\sc List $k$-Colouring}, one can construct an equivalent instance $G'$ of {\sc $k$-Colouring} by adding a clique on new vertices $u_1,\ldots,u_k$ to $G$ and adding an edge between $u_i$ and $v\in V(G)$ if and only if $i\notin L(u)$. As
$\mimw(G')\leq \mimw(G)+k$, this means that {\sc List $k$-Colouring} is polynomial-time solvable even for graph classes of bounded mim-width. Hence, by the same arguments as before,  Theorem~\ref{t-main} also serves as an alternative proof for the aforementioned {\sc List $k$-Colouring} generalization of Theorem~\ref{t-known}.

Let $\omega(G)$ denote the size of a largest clique in a graph $G$. Note that $G$ is $K_t$-free if and only if $\omega(G)\leq t-1$.
Very recently, Chudnovsky et al.~\cite{CKPR20} independently gave for the class of $P_5$-free graphs, an $n^{O(\omega(G))}$-time algorithm for \textsc{Max Partial $H$-Colouring}, which is equivalent to {\sc Independent Set} if $H=P_1$ and to {\sc Odd Cycle Transversal} if $H=P_2$. 
They noted that \textsc{Max Partial $H$-Colouring} is polynomial-time solvable for graph classes whose mim-width is bounded and quickly computable. Hence, Theorem~\ref{t-main} also provides a polynomial-time algorithm for \textsc{Max Partial $H$-Colouring} restricted to 
$(K_t,sP_1+P_5)$-free graphs for every $s\geq 0$ and $t\geq 1$, or equivalently, $(sP_1+P_5)$-free graphs $G$ with $\omega(G)\leq t-1$. However, the running time of this algorithm is worse than $n^{O(\omega(G))}$ (see~\cite{CKPR20} for details).
 
The rest of the paper is devoted to the proof of Theorem~\ref{t-main}.

\section{The Proof of Theorem~\ref{t-main}}

For a graph $G=(V,E)$,  a set $D\subseteq V$ is {\it dominating} if every vertex of $G$ belongs to~$D$ or is adjacent to a vertex of $D$; we also say that  $G[D]$ is {\em dominating}.
We need the following two known results on $(sP_1+P_5)$-free graphs, which distinguish the cases $s=0$ and $s>0$.

\begin{lemma}[\cite{BT90}]\label{l-bt}
Every connected $P_5$-free graph $G$
has a dominating $P_3$ or a dominating complete graph. 
\end{lemma}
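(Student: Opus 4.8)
This is a classical fact, due to Bacs\'o and Tuza; since the paper only invokes it, I sketch how one would prove it. The plan is to start from an arbitrary connected dominating set of $G$, shrink it to an inclusion-minimal one $D$, and show that $D$ is either small — so that $G[D]$ is $K_1$, $K_2$, $P_3$, or $K_3$ — or a clique. Throughout we use one preliminary observation: a connected $P_5$-free graph has diameter at most~$3$, because a shortest path between two vertices at distance at least~$4$ is an induced $P_5$; since induced subgraphs of $P_5$-free graphs are $P_5$-free, the same bound applies to $G[D]$ whenever $G[D]$ is connected. Every connected graph admits a connected dominating set (for instance the vertex set of a spanning tree), so $G$ has an inclusion-minimal one, which we call $D$. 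If $|D| \le 3$, then $G[D]$ is a connected graph on at most three vertices, hence $K_1$, $K_2$, $P_3$, or $K_3$, and $D$ is a dominating clique or a dominating $P_3$; so from now on assume $|D| \ge 4$.

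We claim that $G[D]$ is complete. Suppose not, and among all non-adjacent pairs of vertices of $D$ pick one, $a,b$, at minimum distance in $G[D]$; let $P\colon a = d_0, d_1, \dots, d_k = b$ be a shortest $a$--$b$ path in $G[D]$. All vertices of $P$ lie in $D$, so a chord of $P$ would be an edge of $G[D]$; hence $P$ is an induced path of $G$, and the diameter bound gives $k \in \{2,3\}$. Apply the minimality of $D$ at the endpoint $a$: since $D \setminus \{a\}$ is not a connected dominating set, either $G[D] - a$ is disconnected — in which case $G[D]$ being connected forces $a$ to have a neighbour $c \in D$ lying in a component of $G[D] - a$ that avoids $\{d_1, \dots, d_k\}$ — or $a$ has a private neighbour $x \notin D$ with $N(x) \cap D = \{a\}$. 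In either case the new vertex is adjacent to $a$ and to no other vertex of $P$, so prepending it to $P$ yields an induced path on $k+2$ vertices of $G$.

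If $k = 3$ this is an induced $P_5$, contradicting $P_5$-freeness. If $k = 2$, the extended path is only a $P_4$, so we run the same argument at $b$: this produces a further vertex adjacent to $b$ and to no other vertex of $P$, and appending it gives an induced $P_5$ \emph{unless} the two new vertices happen to be adjacent. In that remaining case one finds instead an induced $C_5$ on $a$, $d_1$, $b$ and the two new vertices, which is not itself a forbidden configuration; it is ruled out by a further appeal to the minimality of $D$ (for example, if $D$ is exactly this $C_5$ then each of its vertices has a private neighbour outside $D$, and attaching one such neighbour extends the $C_5$ to an induced $P_5$). Either way $P_5$-freeness is contradicted, so $G[D]$ is complete, which completes the argument.

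I expect the $k = 2$ stage to be the main obstacle: it requires a genuine case analysis according to which of the two minimality alternatives holds at each endpoint of $P$, and according to how the auxiliary vertices are joined to each other, with care taken to eliminate the degenerate configurations (such as the induced $C_5$ above) that do not by themselves contain an induced $P_5$. The standard way to keep this under control is to choose $D$ so as to minimise, after $|D|$, a suitable secondary quantity; alternatively, one can argue by induction on $|V(G)|$, deleting the closed neighbourhood $N[v]$ of a carefully chosen vertex $v$, applying induction to the (smaller, $P_5$-free) components of $G - N[v]$, and re-gluing the resulting local dominating structures through $N[v]$ using $P_5$-freeness.
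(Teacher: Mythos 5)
The paper does not prove this lemma at all---it is imported verbatim from Bacs\'o and Tuza~\cite{BT90}---so there is no in-paper argument to compare against; the only question is whether your sketch constitutes a complete proof, and it does not. The skeleton is sound: for an inclusion-minimal connected dominating set $D$, the dichotomy ``private neighbour outside $D$ or cut vertex of $G[D]$'' does extend any induced path of $G[D]$ by one vertex at either end, your $k=3$ case correctly shows $G[D]$ is in fact $P_4$-free, and the reduction to $|D|\ge 4$ with all non-adjacent pairs at distance $2$ is fine. The intermediate statement you are aiming for (such a $D$ with $|D|\ge 4$ induces a clique) does appear to be true.

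The genuine gap is exactly where you flag it, and the repair you propose does not work. First note that in the surviving sub-case the two auxiliary vertices $x$ and $y$ are necessarily private neighbours of $a$ and $b$ lying \emph{outside} $D$: if either came from the cut-vertex alternative it would lie in $D$, hence be non-adjacent to the other one (a private neighbour sees only one vertex of $D$, and two cut-vertex witnesses land in distinct components of $G[D]-a$), and the induced $P_5$ appears immediately. So the induced $C_5$ on $\{x,a,d_1,b,y\}$ contains only three vertices of $D$, the case ``$D$ is exactly this $C_5$'' never arises, minimality of $D$ says nothing about $x$ and $y$, and the pendant-vertex trick is unavailable. Closing the case genuinely requires using $|D|\ge 4$ and the minimality dichotomy at the \emph{other} vertices of $D$, chasing further forced adjacencies among private neighbours. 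For instance: any $e\in D\setminus\{a,b\}$ adjacent to $d_1$ but not to $a$ gives the induced $P_5$ $y$--$x$--$a$--$d_1$--$e$ (and symmetrically for $b$); once one has reduced to the situation where every vertex of $D\setminus\{a,b\}$ sees both $a$ and $b$, a second common neighbour $d_2$ and a private neighbour $z$ of $d_1$ exist, $z$ must see $x$ or $y$ (else $z$--$d_1$--$a$--$x$--$y$ is induced), and then $z$--$x$--$a$--$d_2$--$b$ or its mirror is an induced $P_5$. Until an argument of this kind---with its remaining sub-cases---is written out, the proof is incomplete; your alternative suggestion of inducting on $|V(G)|$ via $G-N[v]$ is closer to the published proofs but is likewise only gestured at.
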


\begin{lemma}[\cite{CGKP15}]\label{l-ds}
For an integer $s\geq 1$, let $G$ be an $(sP_1+P_5)$-free graph.
If $G$ contains an induced $P_5$, then $G$ contains a dominating induced 
$rP_1+P_5$ for some $r<s$.
\end{lemma}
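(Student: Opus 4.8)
The plan is to prove the lemma by a greedy maximality argument: I will build a dominating induced $rP_1+P_5$ by starting from an arbitrary induced $P_5$ and repeatedly absorbing any vertex that the current subgraph fails to dominate, using $(sP_1+P_5)$-freeness to cap the process before $r$ reaches $s$.

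Concretely, among all induced subgraphs of $G$ isomorphic to $jP_1+P_5$ for some integer $j\geq 0$, I would choose one, say $H$, for which $j$ is as large as possible; write $r$ for this value. Such a subgraph exists with $j=0$ because, by hypothesis, $G$ contains an induced $P_5$. Since $G$ is $(sP_1+P_5)$-free and $jP_1+P_5$ contains $sP_1+P_5$ as an induced subgraph whenever $j\geq s$ (delete $j-s$ of the isolated vertices), no admissible value of $j$ can be at least $s$; hence $0\le r\le s-1<s$, which already delivers the required bound on $r$.

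It then remains to argue that the extremal choice of $H$ forces $V(H)$ to be dominating. I would prove this by contradiction: suppose some vertex $w\in V(G)$ is not dominated by $H$, so that $w\notin V(H)$ and $w$ has no neighbour in $V(H)$. The key observation is that $V(H)\cup\{w\}$ then induces an $(r+1)P_1+P_5$: the $P_5$ part and the $r$ pairwise-nonadjacent isolated vertices of $H$ are untouched, and $w$ is nonadjacent to all of them, so $w$ simply becomes one more isolated vertex of the disjoint union. This contradicts the maximality of $r$, so no such $w$ exists and $H$ is dominating, which completes the proof.

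I do not expect a serious obstacle here, as the argument is purely extremal. The one point that needs a careful (but routine) check is that adjoining an undominated vertex genuinely preserves the induced structure, sending $jP_1+P_5$ to $(j+1)P_1+P_5$; this amounts to verifying that $w$ is nonadjacent to the entire current subgraph and distinct from its vertices, both of which follow immediately from $w$ being undominated. The only modelling choice worth fixing in advance is to measure maximality by the number $r$ of added copies of $P_1$ (equivalently, by $|V(H)|$), so that the forbidden subgraph $sP_1+P_5$ supplies the termination bound directly.
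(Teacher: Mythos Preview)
Your argument is correct: choosing an induced $rP_1+P_5$ with $r$ maximal, bounding $r<s$ by $(sP_1+P_5)$-freeness, and then observing that any undominated vertex would extend the subgraph to an $(r+1)P_1+P_5$ is exactly the right extremal reasoning, and the check that adjoining an undominated vertex preserves the induced structure is indeed immediate.

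Note, however, that the paper does not supply its own proof of this lemma; it is quoted as a known result from~\cite{CGKP15} and invoked as a black box. So there is no in-paper proof to compare against. Your write-up is the standard short argument one would expect for this statement.
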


Let $G_1$ and $G_2$ be two vertex-disjoint graphs. We {\it identify} two vertices $u\in V(G_1)$ and $v\in V(G_2)$ by adding an edge between them, which we then contract.

We will need a new result on mim-width, which might be of independent interest.
It shows that, given a partition of the vertex set of a graph $G$, we can bound the mim-width of $G$ in terms of the mim-width of the graphs induced by each part and the mim-width between any two of the parts.

\begin{lemma}
\label{mimmultijoin}
Let $G$ be a graph, and let $(X_1,\dotsc,X_p)$ be a partition of $V(G)$ such that $\cutmim_G(X_i,X_j) \le c$ for all distinct $i,j \in \{1,\dotsc,p\}$, and $p \ge 2$.  Then \[\mimw(G) \le \max\left\{c\left\lfloor\left(\frac{p}{2}\right)^2\right\rfloor,\max_{i \in \{1,\dotsc,p\}}\{\mimw(G[X_i])\} + c(p-1)\right\}.\]
Moreover, if $(T_i,\delta_i)$ is a branch decomposition of $G[X_i]$ for each $i$, then we can construct, in $O(1)$ time, a branch decomposition $(T,\delta)$ of $G$ with $\mimw(T,\delta) \le \max\{c\lfloor(\frac{p}{2})^2\rfloor,\max_{i \in \{1,\dotsc,p\}}\{\mimw(T_i,\delta_i)\} + c(p-1)\}$.
\end{lemma}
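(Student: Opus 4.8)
The plan is to build the branch decomposition $(T,\delta)$ of $G$ by combining the given branch decompositions $(T_i,\delta_i)$ of the parts $G[X_i]$ along a "caterpillar" structure: take a path $t_1,\dots,t_{p-1}$ in $T$, attach (a subdivided copy of) $T_i$ at $t_i$ for $i \le p-2$, and attach both $T_{p-1}$ and $T_p$ at the last node $t_{p-1}$. Concretely, one identifies the root-edge of each $T_i$ with a pendant edge hanging off the spine, so the resulting tree $T$ is still subcubic and $\delta$ restricts to $\delta_i$ on each $X_i$. This construction takes $O(1)$ time (in the sense that the number of tree-manipulation operations is bounded by a function of $p$ alone), which gives the "moreover" part once we have the width bound.

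Next I would classify the edges $e$ of $T$ into two types and bound $\cutmim_G(A_e,\overline{A_e})$ for each. The \emph{internal} edges are those lying inside some attached copy of $T_i$; for such an edge, the cut $(A_e,\overline{A_e})$ splits $X_i$ into two pieces $Y,X_i\setminus Y$ exactly as the corresponding edge of $T_i$ does, while all of $\overline{X_i} = \bigcup_{j\neq i}X_j$ lies on one side. An induced matching in $G[A_e,\overline{A_e}]$ then decomposes into (i) a part with both endpoints in $X_i$, of size at most $\mimw(T_i,\delta_i)$, and (ii) a part with one endpoint in $X_i$ and one in some $X_j$; since there are $p-1$ other parts and each contributes at most $c$ to an induced matching (by the hypothesis $\cutmim_G(X_i,X_j)\le c$), and — crucially — an induced matching stays induced when restricted to any subset of its edges, the second part has size at most $c(p-1)$. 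Hence internal edges give $\cutmim_G(A_e,\overline{A_e}) \le \mimw(T_i,\delta_i) + c(p-1)$. The \emph{spine} edges are those on the path $t_1,\dots,t_{p-1}$ (together with the two edges at $t_{p-1}$ leading into $T_{p-1}$ and $T_p$): such an edge partitions $\{1,\dots,p\}$ into two index sets $I$ and $\overline I$ with $A_e = \bigcup_{i\in I}X_i$. An induced matching in $G[A_e,\overline{A_e}]$ is then a disjoint union over pairs $(i,j)\in I\times\overline I$ of induced matchings between $X_i$ and $X_j$, so its size is at most $c\,|I|\,|\overline I| \le c\lfloor (p/2)^2\rfloor$ by maximizing the product $|I|(p-|I|)$.

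Taking the maximum over all edges yields $\mimw(T,\delta) \le \max\{c\lfloor(p/2)^2\rfloor,\ \max_i\{\mimw(T_i,\delta_i)\} + c(p-1)\}$, and since $\mimw(G) \le \mimw(T,\delta)$ and we may start from optimal decompositions $(T_i,\delta_i)$ of $G[X_i]$, the displayed bound on $\mimw(G)$ follows. The step requiring the most care is the bookkeeping for internal edges: one must check that the decomposition of an arbitrary induced matching $F$ of $G[A_e,\overline{A_e}]$ into its "within-$X_i$" and "$X_i$-to-$X_j$" sub-matchings is legitimate — i.e.\ that each sub-matching is itself an induced matching in the relevant bipartite cut, which is immediate because deleting edges from an induced matching cannot create a chord — and that summing the bounds $\mimw(T_i,\delta_i)$ and $c$ over the $p-1$ values of $j$ is valid even though these sub-matchings share vertices of $X_i$; this is fine since the quantity being bounded is just the total number of edges of $F$, which is additive over this partition of $F$'s edge set. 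I would also note the minor point that if $p=2$ the spine is a single edge and the bound $c\lfloor(p/2)^2\rfloor = c$ is exactly $\cutmim_G(X_1,X_2)$, so the construction degenerates gracefully.
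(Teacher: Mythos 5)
Your proposal is correct and follows essentially the same approach as the paper: attach the decompositions $(T_i,\delta_i)$ to a subcubic tree with $p$ leaves (the paper uses an arbitrary such tree where you use a caterpillar, an immaterial difference), then bound the cut at each edge by the same two-case analysis, giving $c\,|I|\,|\overline{I}| \le c\lfloor(p/2)^2\rfloor$ for spine edges and $\mimw(T_i,\delta_i)+c(p-1)$ for edges inside a copy of $T_i$.
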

\begin{proof}
We construct a branch decomposition $(T,\delta)$ of $G$ with the desired mim-width, as follows.
Let $T_0$ be an arbitrary subcubic tree having $p$ leaves $\ell_1,\dotsc,\ell_p$.
For each $i \in \{1,\dotsc,p\}$, we choose an arbitrary leaf vertex $v_i$ of $T_i$, we identify $v_i$ with $\ell_i$ calling the resulting vertex $\ell_i$, and we create a new pendant edge incident to $\ell_i$, where the new leaf vertex adjacent to $\ell_i$ is called $v_i$.
Then $T$ is a subcubic tree whose set of leaves is the disjoint union of the leaves of $T_i$ for each $i \in \{1,\dotsc,p\}$. 
See \cref{trees-fig}, for example.
For a leaf $v$ of $T$, we set $\delta(v) = \delta_i(v)$ where $v$ is a leaf of $T_i$.
Now $(T,\delta)$ is a branch decomposition of $G$, and clearly this branch decomposition can be constructed in $O(1)$ time.
It remains to prove the upper bound for $\mimw(T,\delta)$.

\begin{figure}[hb]
  \centering
  \begin{tikzpicture}[scale=0.39,line width=1pt]
    \tikzset{VertexStyle/.append style = {minimum height=5,minimum width=5}}
    \node at (1,-1.5) {\large$T_0$};

    \Vertex[x=-1,y=1,LabelOut=true,L=$\ell_1$,Lpos=180]{l1}
    \Vertex[x=-1,y=-1,LabelOut=true,L=$\ell_2$,Lpos=180]{l2}
    \Vertex[x=3,y=1,LabelOut=true,L=$\ell_{p-1}$,Lpos=0]{l3}
    \Vertex[x=3,y=-1,LabelOut=true,L=$\ell_{p}$,Lpos=0]{l4}

    \SetVertexNoLabel
    \tikzset{VertexStyle/.append style = {fill=gray}}
    \Vertex[x=0,y=0]{i0}
    \Vertex[x=2,y=0]{i1}

    \Edge(i0)(l1)
    \Edge(i0)(l2)
    \tikzset{EdgeStyle/.append style = {dotted}}
    \Edge(i0)(i1)
    \tikzset{EdgeStyle/.append style = {solid}}
    \Edge(i1)(l3)
    \Edge(i1)(l4)

    \node at (7.0,2) {\large$T_1$};

    \tikzset{VertexStyle/.append style = {fill=black}}
    \Vertex[x=8,y=3]{t1_1}
    \Vertex[x=8,y=1]{t1_2}
    \Vertex[x=12,y=3]{t1_3}
    \SetVertexLabel
    \Vertex[x=12,y=1,LabelOut=true,L=$v_1$,Lpos=0]{t1_4}
    \SetVertexNoLabel

    \tikzset{VertexStyle/.append style = {fill=gray}}
    \Vertex[x=9,y=2]{t1_i0}
    \Vertex[x=11,y=2]{t1_i1}

    \Edge(t1_i0)(t1_1)
    \Edge(t1_i0)(t1_2)
    \Edge(t1_i0)(t1_i1)
    \Edge(t1_i1)(t1_3)
    \Edge(t1_i1)(t1_4)

    \node at (7.0,-2) {\large$T_2$};

    \tikzset{VertexStyle/.append style = {fill=black}}
    \Vertex[x=8,y=-3]{t2_1}
    \Vertex[x=8,y=-1]{t2_2}
    \Vertex[x=12,y=-3]{t2_3}
    \Vertex[x=10,y=-3.3]{t2_5}
    \SetVertexLabel
    \Vertex[x=12,y=-1,LabelOut=true,L=$v_2$,Lpos=0]{t2_4}
    \SetVertexNoLabel

    \tikzset{VertexStyle/.append style = {fill=gray}}
    \Vertex[x=9,y=-2]{t2_i0}
    \Vertex[x=10,y=-2]{t2_i2}
    \Vertex[x=11,y=-2]{t2_i1}

    \Edge(t2_i0)(t2_1)
    \Edge(t2_i0)(t2_2)
    \Edge(t2_i0)(t2_i2)
    \Edge(t2_i2)(t2_i1)
    \Edge(t2_i2)(t2_5)
    \Edge(t2_i1)(t2_3)
    \Edge(t2_i1)(t2_4)

    \node at (20.0,2.5) {\large$T_{p-1}$};

    \tikzset{VertexStyle/.append style = {fill=black}}
    \Vertex[x=18,y=3.3]{t3_1}
    \SetVertexLabel
    \Vertex[x=17,y=1,LabelOut=true,L=$v_{p-1}$,Lpos=180]{t3_2}
    \SetVertexNoLabel
    \Vertex[x=19,y=1]{t3_4}

    \tikzset{VertexStyle/.append style = {fill=gray}}
    \Vertex[x=18,y=2]{t3_i0}

    \Edge(t3_i0)(t3_1)
    \Edge(t3_i0)(t3_2)
    \Edge(t3_i0)(t3_4)

    \node at (21.0,-2) {\large$T_p$};

    \tikzset{VertexStyle/.append style = {fill=black}}
    \Vertex[x=16,y=-3]{t4_1}
    \SetVertexLabel
    \Vertex[x=16,y=-1,LabelOut=true,L=$v_p$,Lpos=180]{t4_2}
    \SetVertexNoLabel
    \Vertex[x=20,y=-3]{t4_3}
    \Vertex[x=20,y=-1]{t4_4}
    \Vertex[x=17,y=-4.3]{t4_5}
    \Vertex[x=19,y=-4.3]{t4_6}

    \tikzset{VertexStyle/.append style = {fill=gray}}
    \Vertex[x=17,y=-2]{t4_i0}
    \Vertex[x=19,y=-2]{t4_i1}
    \Vertex[x=18,y=-2]{t4_i2}
    \Vertex[x=18,y=-3.3]{t4_i3}

    \Edge(t4_i0)(t4_1)
    \Edge(t4_i0)(t4_2)
    \Edge(t4_i0)(t4_i2)
    \Edge(t4_i2)(t4_i1)
    \Edge(t4_i1)(t4_3)
    \Edge(t4_i1)(t4_4)
    \Edge(t4_i2)(t4_i3)
    \Edge(t4_i3)(t4_5)
    \Edge(t4_i3)(t4_6)

    \node at (28.0,-5) {\large$T$};

    \SetVertexLabel
    \tikzset{VertexStyle/.append style = {fill=white}}
    \Vertex[x=28,y=1,LabelOut=true,L=$\ell_1$,Lpos=180]{t1}
    \Vertex[x=28,y=-1,LabelOut=true,L=$\ell_2$,Lpos=180]{t2}
    \Vertex[x=34,y=1.5,LabelOut=true,L=$\ell_{p-1}$,Lpos=-90]{t3}
    \Vertex[x=33,y=-1,LabelOut=true,L=$\ell_{p}$,Lpos=0]{t4}

    \SetVertexNoLabel
    \tikzset{VertexStyle/.append style = {fill=gray}}
    \Vertex[x=29,y=0]{i0}
    \Vertex[x=32,y=0]{i1}

    \Edge(i0)(t1)
    \Edge(i0)(t2)
    \tikzset{EdgeStyle/.append style = {dotted}}
    \Edge(i0)(i1)
    \tikzset{EdgeStyle/.append style = {solid}}
    \Edge(i1)(t3)
    \Edge(i1)(t4)

    \tikzset{VertexStyle/.append style = {fill=black}}
    \Vertex[x=24,y=3]{t_1_1}
    \Vertex[x=24,y=1]{t_1_2}
    \Vertex[x=28,y=3]{t_1_3}
    \SetVertexLabel
    \Vertex[x=29,y=2,LabelOut=true,L=$v_1$,Lpos=0]{t_1_4}
    \SetVertexNoLabel

    \tikzset{VertexStyle/.append style = {fill=gray}}
    \Vertex[x=25,y=2]{t_1_i0}
    \Vertex[x=27,y=2]{t_1_i1}

    \Edge(t_1_i0)(t_1_1)
    \Edge(t_1_i0)(t_1_2)
    \Edge(t_1_i0)(t_1_i1)
    \Edge(t_1_i1)(t_1_3)
    \Edge(t_1_i1)(t1)
    \Edge(t1)(t_1_4)

    \tikzset{VertexStyle/.append style = {fill=black}}
    \Vertex[x=24,y=-3]{t_2_1}
    \Vertex[x=24,y=-1]{t_2_2}
    \Vertex[x=28,y=-3]{t_2_3}
    \Vertex[x=26,y=-3.3]{t_2_5}
    \SetVertexLabel
    \Vertex[x=29,y=-2,LabelOut=true,L=$v_2$,Lpos=-45]{t_2_4}
    \SetVertexNoLabel

    \tikzset{VertexStyle/.append style = {fill=gray}}
    \Vertex[x=25,y=-2]{t_2_i0}
    \Vertex[x=26,y=-2]{t_2_i2}
    \Vertex[x=27,y=-2]{t_2_i1}

    \Edge(t_2_i0)(t_2_1)
    \Edge(t_2_i0)(t_2_2)
    \Edge(t_2_i0)(t_2_i2)
    \Edge(t_2_i2)(t_2_i1)
    \Edge(t_2_i2)(t_2_5)
    \Edge(t_2_i1)(t_2_3)
    \Edge(t_2_i1)(t2)
    \Edge(t2)(t_2_4)

    \tikzset{VertexStyle/.append style = {fill=black}}
    \Vertex[x=35,y=3.8]{t_3_1}
    \SetVertexLabel
    \Vertex[x=33,y=2.5,LabelOut=true,L=$v_{p-1}$,Lpos=135]{t_3_2}
    \SetVertexNoLabel
    \Vertex[x=36,y=1.5]{t_3_4}

    \tikzset{VertexStyle/.append style = {fill=gray}}
    \Vertex[x=35,y=2.5]{t_3_i0}

    \Edge(t_3_i0)(t_3_1)
    \Edge(t_3_i0)(t3)
    \Edge(t3)(t_3_2)
    \Edge(t_3_i0)(t_3_4)

    \tikzset{VertexStyle/.append style = {fill=black}}
    \Vertex[x=33,y=-3]{t_4_1}
    \SetVertexLabel
    \Vertex[x=32,y=-2,LabelOut=true,L=$v_p$,Lpos=180]{t_4_2}
    \SetVertexNoLabel
    \Vertex[x=37,y=-3]{t_4_3}
    \Vertex[x=37,y=-1]{t_4_4}
    \Vertex[x=34,y=-4.3]{t_4_5}
    \Vertex[x=36,y=-4.3]{t_4_6}

    \tikzset{VertexStyle/.append style = {fill=gray}}
    \Vertex[x=34,y=-2]{t_4_i0}
    \Vertex[x=36,y=-2]{t_4_i1}
    \Vertex[x=35,y=-2]{t_4_i2}
    \Vertex[x=35,y=-3.3]{t_4_i3}

    \Edge(t_4_i0)(t_4_1)
    \Edge(t_4_i0)(t4)
    \Edge(t4)(t_4_2)
    \Edge(t_4_i0)(t_4_i2)
    \Edge(t_4_i2)(t_4_i1)
    \Edge(t_4_i1)(t_4_3)
    \Edge(t_4_i1)(t_4_4)
    \Edge(t_4_i2)(t_4_i3)
    \Edge(t_4_i3)(t_4_5)
    \Edge(t_4_i3)(t_4_6)
  \end{tikzpicture}
  \caption{An example of the construction of $T$ in the proof of \cref{mimmultijoin}.}
  \label{trees-fig}
\end{figure}
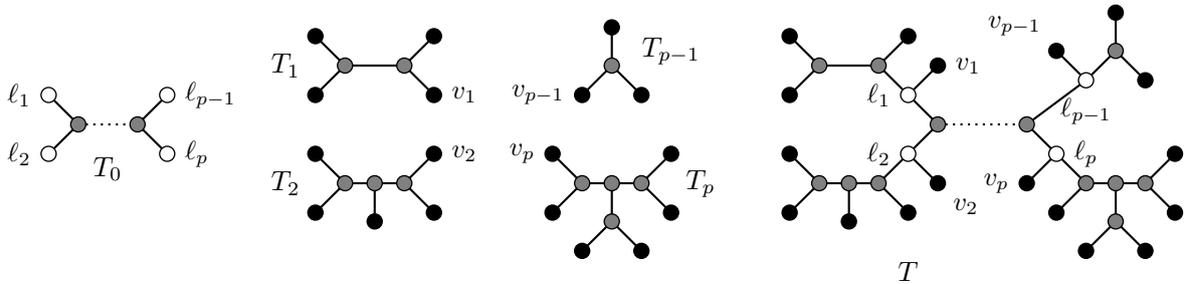

Consider $e \in E(T)$ and the partition $(A_e,\overline{A_e})$ of $V(G)$.
If $e \in E(T_0)$, then $A_e = \bigcup_{j \in J} X_j$ for some $J \subseteq \{1,\dots,p\}$.
If $e \in E(T_i)$ for some $i \in \{1,\dotsc,p\}$, then either $A_e$ or $\overline{A_e}$ is properly contained in $X_i$.
The only other possibility is that $e$ is one of the newly created pendant edges, in which case either $A_e$ or $\overline{A_e}$ has size~$1$.

First suppose $e\in E(T_0)$, so $A_e = \bigcup_{j \in J} X_j$ for some $J \subseteq \{1,\dots,p\}$.
We claim that $\cutmim_G(A_e,\overline{A_e}) \le c\left\lfloor\left(\frac{p}{2}\right)^2\right\rfloor$.
Let $M$ be a maximum-sized induced matching in $G[A_e,\overline{A_e}]$.
Let $K = \{1,\dotsc,p\} \setminus J$.
For each $j \in J$ and $k \in K$, there are at most $c$ edges of $M$ with one end in $X_j$ and the other end in $X_k$, since $\cutmim_G(X_j,X_k) \le c$.
Thus $\cutmim_G(A_e,\overline{A_e}) \le c|J||K|$, where $|J| + |K| = p$.
As $c|J||K| \le c\left\lfloor\left(\frac{p}{2}\right)\right\rfloor\left\lceil\left(\frac{p}{2}\right)\right\rceil = c\left\lfloor\left(\frac{p}{2}\right)^2\right\rfloor$, the claim follows.

Now suppose $e \in E(T_i)$ for some $i \in \{1,\dotsc,p\}$, so, without loss of generality, $A_e$ is properly contained in $X_i$.
We claim that $\cutmim_G(A_e,\overline{A_e}) \le \mimw(G[X_i]) + c(p-1)$.
Consider a maximum-sized induced matching $M$ in $G[A_e,\overline{A_e}]$.
As $A_e \subseteq X_i$, all the edges of $M$ have one end in $X_i$.
For each $j \in \{1,\dotsc,p\}$ with $j \neq i$, there are at most $c$ edges of $M$ with one end in $X_j$, since $\cutmim_G(X_i,X_j) \le c$.
Since there are at most $\mimw(G[X_i])$ edges of $M$ with both ends in $X_i$, we deduce that $\cutmim_G(A_e,\overline{A_e}) \le \mimw(G[X_i]) + c(p-1)$, as claimed.  The lemma follows.
\end{proof}

We are now ready to prove Theorem~\ref{t-main}, which we restate below.

\medskip
\noindent
{\bf Theorem~\ref{t-main} (restated).}
{\it For every $s\geq 0$ and $t\geq 1$, the mim-width of the class of $(K_t,sP_1+P_5)$-free graphs is bounded 
and quickly computable.}

\begin{proof}
Let $G=(V,E)$ be a $(K_t,sP_1+P_5)$-free graph for some $s\geq 0$ and $t\geq 1$.
We may assume without loss of generality that $G$ is connected.
We use induction on $t$. If $t=1$, then the statement of the theorem holds trivially.

Suppose that $t\geq 2$. 
First suppose that $G$ is $P_5$-free. Then, by Lemma~\ref{l-bt}, $G$ has a dominating set of size at most $\max\{3,t-1\}$.
Now suppose that $G$ is not $P_5$-free. Then, by Lemma~\ref{l-ds}, $G$ has a dominating set $D$ of size at most 
$s-1+5=s+4$.
Hence, in both cases, $G$ has a dominating set $D$ of constant size, as $s$ and $t$ are fixed; we let $p=|D|$. Moreover, 
we can find $D$ in polynomial time by brute force.

We will partition the vertex set of $G$ with respect to $D$ in the same way as in~\cite{CGKP15,HKLSS10}.
That is, we first assign an arbitrary ordering $d_1,\ldots,d_p$ on the vertices of $D$.
We then let $X_1$ be the set of vertices in $V\setminus D$ adjacent to $d_1$, 
and for $i=2,\ldots,p$, let $X_i$ be the set of vertices in $V\setminus D$
adjacent to $d_i$, but non-adjacent to any $d_h$ with $h\leq i-1$. 
Note that $D$ and the sets $X_1,\ldots,X_p$ partition $V$ (but some of the sets $X_i$ might be empty).

As $\{d_i\}$ dominates $X_i$ for every $i\in \{1,\ldots,p\}$, each $X_i$ induces a $(K_{t-1},sP_1+P_5)$-free subgraph of $G$, which we denote by $G_i$. By the induction hypothesis, the mim-width of $G_i$ is bounded and quickly computable for every $i\in \{1,\ldots,p\}$. 

Consider two sets $X_i$ and $X_j$ with $i<j$. By Ramsey's Theorem, 
for every two positive integers $p$ and~$q$, there exists an integer $R(p,q)$ such that if $G$ is a graph on at least $R(p,q)$ vertices, then $G$ has a clique of size~$p$ or an independent set of size~$q$.
We claim that $\cutmim_G(X_i,X_j) < c=R(t-1,R(t-1,s+2))$.
Towards a contradiction, suppose that $\cutmim_G(X_i,X_j) \geq c$. 
Let $A= \{a_1,a_2,\dotsc,a_c\} \subseteq X_i$ and $B =
\{b_1,b_2,\dotsc,b_c\} \subseteq X_j$ such that $a_ib_i$ is a distinct
edge of a corresponding induced matching. Then, as $G[X_i]$ is
$K_{t-1}$-free, the subgraph of $G$ induced by $A$ contains an
independent set $A'$ of size $c'=R(t-1,s+2)$. Assume without loss of generality that $A'=\{a_1,\ldots,a_{c'}\}$. Let $B'=\{b_1,\ldots,b_{c'}\}$. As $G[X_j]$ is $K_{t-1}$-free, the subgraph of $G$ induced by $B'$ contains an independent set $B''$ of size $s+2$. Assume without loss of generality that $B''=\{b_1,\ldots,b_{s+2}\}$. Then $\{b_1,a_1,d_i,a_2,b_2\} \cup \{b_3,\ldots,b_{s+2}\}$ induces an $sP_1+P_5$, a contradiction. 
We deduce that $\cutmim_G(X_i,X_j) < c$.

We now apply Lemma~\ref{mimmultijoin} to find that the mim-width of $G-D$ is bounded and quickly computable.
Let $(T,\delta)$ be a branch decomposition of $G-D$ having mim-width~$k$. 
As $|D|=p$, we can readily extend $(T,\delta)$ to a branch decomposition $(T^*,\delta^*)$ of mim-width at most $k+p$, where $T^*$ is obtained by identifying one leaf of $T$ with a leaf of an arbitrary subcubic tree having $|D|+2$ leaves.
\end{proof}

\end{document}